\theoremstyle{plain}
\newtheorem{thm}{Theorem}
\newtheorem{lem}[thm]{Lemma}
\newtheorem{rem}{Remark}
\newtheorem{sty1}{Theorem}
\newtheorem{defi}[sty1]{Definition}
\begin{document}
\title{Fractional Fourier Domain PAPR Reduction}

\author{Yewen Cao,~Yulin Shao,~Rose Qingyang Hu

\thanks{Y. Cao and Y. Shao are with the State Key Laboratory of Internet of Things for Smart City and the Department of Electrical and Computer Engineering, University of Macau, Macau S.A.R. (E-mails: \{yc47409,ylshao\}@um.edu.mo).}
\thanks{R. Q. Hu is with the Department of Electrical and Computer Engineering, Utah State University, Logan, UT 84322 USA (e-mail:
rose.hu@usu.edu).}
}

\maketitle
\begin{abstract}
High peak-to-average power ratio (PAPR) has long posed a challenge for multi-carrier systems, impacting amplifier efficiency and overall system performance. This paper introduces dynamic angle fractional Fourier division multiplexing (DA-FrFDM), an innovative multi-carrier system that effectively reduces PAPR for both QAM and Gaussian signals with minimal signaling overhead. DA-FrFDM leverages the fractional Fourier domain to balance PAPR characteristics between the time and frequency domains, achieving significant PAPR reduction while preserving signal quality. Furthermore, DA-FrFDM refines signal processing and enables one-tap equalization in the fractional Fourier domain through the simple multiplication of time-domain signals by a quadratic phase sequence. Our results show that DA-FrFDM not only outperforms existing PAPR reduction techniques but also retains efficient inter-carrier interference (ICI) mitigation capabilities in doubly dispersive channels.
\end{abstract}

\begin{IEEEkeywords}
PAPR, fractional Fourier domain, DFrFT, OFDM, doubly dispersive channel.
\end{IEEEkeywords}

\section{Introduction}
Multi-carrier communication systems, such as orthogonal frequency division multiplexing (OFDM), are renowned for their high spectral efficiency and robustness against multi-path fading, making them a cornerstone in modern broadband wireless systems \cite{stuber2004broadband,rahmatallah2013peak,shao2021federated}. By dividing the available spectrum into several orthogonal subcarriers, these systems effectively handle high data rate transmissions over hostile wireless channels. This technique also simplifies the equalization process at the receiver end, which is crucial for maintaining signal integrity in multipath channels.

Despite the numerous advantages, multi-carrier systems face a notable drawback: high peak-to-average power ratio (PAPR) \cite{rahmatallah2013peak}. High PAPR in the transmitted signal not only complicates the amplifier design but also forces the use of expensive and power-inefficient linear power amplifiers to avoid non-linear distortion. This issue is particularly critical as it directly impacts the energy efficiency and operational cost of wireless communication systems, making PAPR reduction a pivotal area of research \cite{rahmatallah2013peak,overview_PAPR}. The emergence of new communication paradigms, such as joint source-channel coding (JSCC) \cite{JSCC}, semantic communication \cite{bourtsoulatze2019deep,TSC}, and optical communications \cite{2409.11928,op2}, has introduced discrete-time Gaussian-amplitude signals, which are even more sensitive to PAPR-related issues. Consequently, these advancements amplify the need for more efficient and robust PAPR reduction techniques.

In addressing the high PAPR in multi-carrier systems, a variety of reduction techniques have been developed \cite{rahmatallah2013peak,shao2022semantic}. Traditional methods include clipping and filtering \cite{clipping}, which, while straightforward, can cause notable signal distortion. Selective Mapping (SLM) \cite{SLM} and Partial Transmitted Sequence (PTS) \cite{PTS} offer more controlled approaches by manipulating the phase of the signal to minimize PAPR. Coding techniques \cite{coding}, on the other hand, provide a trade-off between bandwidth efficiency and system complexity. Other strategies, such as tone reservation\cite{TR} and tone injection\cite{TI}, focus on utilizing reserved subcarriers to adjust the peak amplitude, potentially reducing the spectral efficiency or necessitating substantial additional signaling. 

{\it Contributions:} In this paper, we introduce dynamic angle fractional Fourier division multiplexing (DA-FrFDM), a novel multi-carrier system that significantly reduces PAPR for both QAM and Gaussian signals with minimal signaling overhead. The core insight driving our approach is that PAPR is mainly determined by the maximum amplitude of the signal within a given domain at a certain average power level. The PAPR characteristics of a signal generally exhibit a dual relationship between the time and frequency domains. Efforts to minimize PAPR in one domain often result in an increase in the other. For example, while QAM signals demonstrate advantageous PAPR properties in the time domain, they perform suboptimally in the frequency domain. This inherent duality has steered our exploration towards an intermediate domain -- the fractional Fourier domain -- which lies between the time and frequency domains. This novel approach opens up a more effective avenue for optimizing PAPR performance, harnessing the unique properties of the discrete fractional Fourier transform (DFrFT) \cite{candan2000discrete,dfrft}. Our DA-FrFDM system presents three key advantages:
\begin{itemize}[leftmargin=0.45cm]
    \item Low PAPR. DA-FrFDM achieves substantial PAPR reduction, outperforming existing techniques like clipping, PTS, and SLM, and efficiently handles both QAM and Gaussian signals.
    \item Simple equalization. Unlike existing DFrFT-based systems that require complex channel manipulation, DA-FrFDM supports one-tap equalization in the fractional Fourier domain by simply multiplying the time-domain samples with a quadratic phase sequence.
    \item Effective inter-carrier interference (ICI) mitigation in doubly dispersive channels. DA-FrFDM offers an inherent advantage over traditional OFDM systems by effectively mitigating ICI in frequency-selective and fast fading channels. We demonstrate that DA-FrFDM is capable of reducing PAPR without compromising its ICI mitigation capability.
\end{itemize}


\section{DA-FrFDM System Overview}\label{sec:II}
This section presents the structural framework of DA-FrFDM.  An illustrative diagram of the framework is provided in Fig.~\ref{fig:system} to support the functional aspects described below.

\begin{figure}[t]
  \centering
  \includegraphics[width=0.85\columnwidth]{./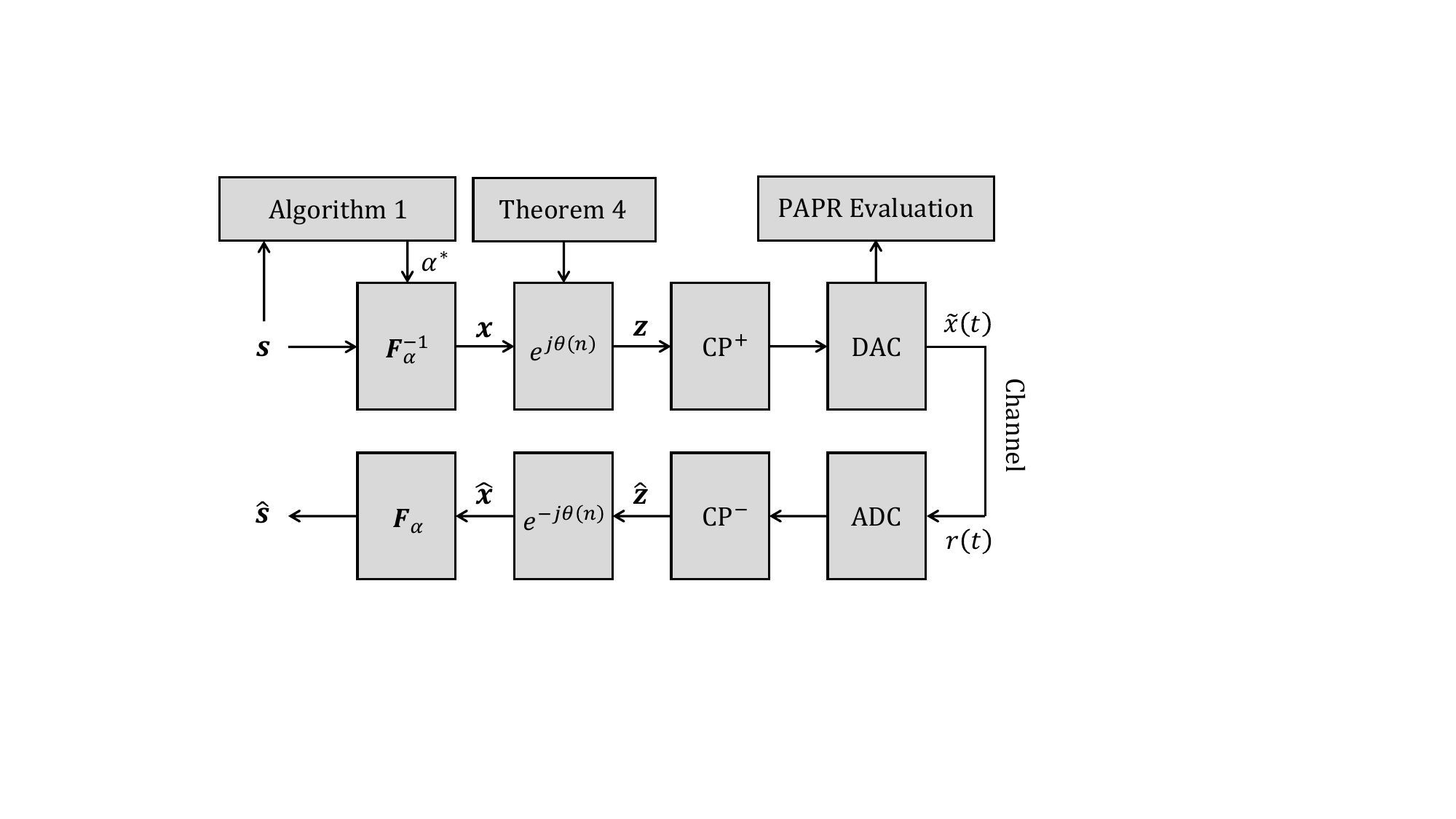}\\
  \caption{The structural framework of DA-FrFDM.}
\label{fig:system}
\end{figure}

Let $\bm{s}=\big[s[0],\allowbreak s[1],\allowbreak \cdots,\allowbreak s[N-1]\big]^\top$ represent a block of source symbols to be transmitted, where $s[k]$ can either be discrete-amplitude symbols, such as QAM symbols, or continuous-amplitude symbols, such as complex Gaussian symbols. 
In our DA-FrFDM system, these symbols are initially placed in the fractional Fourier domain, and subsequently transformed to the time domain using an inverse DFrFT (IDFrFT), $\bm{F}_{\alpha}^{-1}$. The IDFrFT is characterized by an order $a\in\left(-2,2\right]$, or equivalently, a shift angle $\alpha\triangleq\frac{a\pi}{2}\in\left(-\pi,\pi\right]$.


When $\alpha= \ell\pi$, $\ell \in \mathbb{Z}$, IDFrFT degenerates to single carrier frequency domain equalization (SC-FDE), which does not support the multiplexing capabilities required for multi-user systems. Additionally, due to the inherent symmetry and periodicity of the DFrFT, we set $\alpha\in(0,\pi) $. The time-domain samples of the DA-FrFDM block, denoted by $\bm{x}\triangleq\big[x[0],\allowbreak x[1],\allowbreak \cdots,\allowbreak x[N-1]\big]^\top=\bm{F}^{-1}_{\alpha}\bm{s}$, can be written as\cite{dfrft}
\begin{equation}\label{eq:II1}
\begin{aligned}
\hspace{1.2cm}x[n] &= \sqrt{\frac{\sin\alpha + j\cos\alpha}{N}} \times \\
&\hspace{-2cm} \sum\limits_{k=0}^{N-1} s[k] e^{-\frac{j}{2}n^2\cot\alpha T_s^2} e^{-\frac{j}{2}k^2\cot\alpha(\Delta u)^2} e^{j2\pi nk/N}, \\
&\hspace{-1cm} 0\leq k\leq N-1,0\leq n\leq N-1,
\end{aligned}
\end{equation}
where $T_s$ is the sampling interval of $x[n]$ and $\Delta u$ is the sampling interval of $X[k]$, with the relationship $\Delta u T_s=\frac{2\pi\sin{\alpha}}{N}$. When $\alpha=\frac{\pi}{2}$, the fractional Fourier domain reverts to the frequency domain, and the DA-FrFDM system degenerates to the traditional OFDM system.

Inspired by the dual relationship between PAPR in the time and frequency domains, DA-FrFDM identifies an optimal fractional domain by determining the most effective angle, denoted as $\alpha^*$, to minimize the PAPR for each realization of the data symbol block $\bm{s}$. The methodology for determining $\alpha^*$ and its impact on system performance are detailed later in Section \ref{sec:III}.

After transforming the symbol block $\bm{s}$ into the time-domain samples $\bm{x}$, DA-FrFDM applies a phase term $e^{j\theta(n)}$ to each sample and adds a cyclic prefix (CP). The purpose of the phase term $e^{j\theta(n)}$ is to facilitate the use of one-tap equalization in the fractional Fourier domain at the receiver. This refined design avoids the complex channel manipulation in existing DFrFT-based systems \cite{amir}, and streamlines the signal processing. The phase function $\theta(n)$ is given later in Theorem \ref{thm:cir_thm}. 

Upon passing through the digital-to-analog converter (D\-AC), the DA-FrFDM block is converted into a continuous-time format, represented as $\widetilde{x}(t)$. The signal is then transmitted over a wireless channel to the receiver, and the received signal $r(t)$ can be written as
\begin{equation}\label{eq:yt}
    r(t) =  h(t) \star \widetilde{x}(t)+ n(t),
\end{equation}
where $\star$ denotes the linear convolution operation and $h(t)$ characterizes the channel response. $n(t)$ is additive white Gaussian noise (AWGN).


At the receiver, the signal $r(t)$ undergoes analog-to-digital conversion (ADC), removal of the CP, and phase term $e^{-j\theta(n)}$. The resulting time-domain samples, denoted by $\bm{\widehat{x}}=\big[\widehat{x}[0],\allowbreak \widehat{x}[1],\allowbreak \cdots,\allowbreak \widehat{x}[N-1]\big]^\top$, are converted back to the original fractional Fourier domain using a DFrFT. This transformation produces the reconstructed data symbols $\bm{\widehat{s}}=\bm{F}_{\alpha}\bm{\widehat{x}}$. Finally, $\bm{\widehat{s}}$ is equalized before the final decoding process to rectify the channel impairments.

\begin{rem}
DFrFT can be implemented in several ways, with the two primary forms based on sampling and eigenvalue decomposition. In this paper, our analysis predominantly focuses on the sampling-based form, which allows for a closed-form representation of signals, facilitating our analytical work. Nonetheless, our DA-DFrFT system is also compatible with the eigenvalue decomposition-based form.
\end{rem}


\section{PAPR of DA-FrFDM}\label{sec:III}
The core functionality of the DA-DFrFDM system is its ability to dynamically adjust the angle within the fractional Fourier domain, aiming to identify the optimal angle that minimizes the PAPR. To this end, this section analyzes the PAPR and identify the optimal angle for DA-DFrFDM.

\subsection{PAPR of DA-FrFDM}
PAPR is defined as the ratio of the maximum instantaneous power of a signal to its average power. 
In our DA-FrFDM system, the phase term $e^{j\theta(n)}$ at the transmitter does not affect the PAPR. CP is a replication of the end part of the multi-carrier block, and does not alter the peak instantaneous power and only marginally affects the signal's average power. Therefore, the CP has a minimal impact on the system's PAPR. To effectively evaluate the PAPR, we can focus on the time-domain samples right after the inverse unitary transform, i.e., $\bm{x}$, and their continuous-time counterpart $x(t)$. 

\begin{defi}
The PAPR of DA-FrFDM is defined as

\begin{equation}\label{eq:PAPR_def}
    \eta\triangleq \frac{\max\big\{|x(t)|^2\big\}}{\mathbb{E}\big[|x(t)|^2\big]},~~~
    \eta_{\text{dB}}\triangleq 10\log\eta,
\end{equation}
where the continuous-time baseband DA-FrFDM signal can be written as
\begin{equation}\label{eq:x(t)}
\begin{aligned}
x(t) &= \sqrt{\frac{\sin\alpha + j\cos\alpha}{N}} \times \\
&\hspace{-1.4cm}\sum\limits_{k=0}^{N-1} s[k] e^{-\frac{j}{2}t^2\cot\alpha}  e^{-\frac{j}{2}k^2\cot\alpha(\Delta u)^2} e^{j2\pi kt/T},
\end{aligned}
\end{equation}
for $0\le t\le T$, where $T \triangleq N T_{s}$ is the block duration. By letting $t={nT}_s$ in \eqref{eq:x(t)}, one can obtain the discrete version of fractional Fourier transform in \eqref{eq:II1}.
\end{defi}

To derive $\eta$, we first analyze the envelope power function of $x(t)$, i.e., $|x(t)|^2$. Specifically, we have
\begin{eqnarray}\label{eq:II3}
&&\hspace{-0.4cm} |x(t)|^2 = x(t)\cdot x^*(t) = \frac{1}{N} \sum_{k=0}^{N-1} s[k]s^*[k] + \\
&&\hspace{-0.4cm} \frac{2}{N} \left\{ \sum_{p=1}^{N-1} \gamma_{p}^{(1)}(A_{\alpha})\cos(\frac{2\pi pt}{T}) + \sum_{p=1}^{N-1} \gamma_{p}^{(2)}(A_{\alpha})\cos(\frac{2\pi pt}{T}) \right. \notag\\
&&\hspace{-0.4cm} \quad \left. + \sum_{p=1}^{N-1} \gamma_{p}^{(3)}(A_{\alpha})\sin(\frac{2\pi pt}{T}) + \sum_{p=1}^{N-1} \gamma_{p}^{(4)}(A_{\alpha})\sin(\frac{2\pi pt}{T}) \right\} \notag\\
&&\hspace{-0.4cm} \triangleq \frac{1}{N} \sum_{k=0}^{N-1} s[k]s^*[k] + \frac{2}{N} \{g(t)\}, \notag
\end{eqnarray}
where we have defined
\begin{equation*}
\gamma_{p}^{(1)}(A_{\alpha}) \triangleq 
\sum\limits_{m=0}^{N-1-p}\lambda_{m,p}\cos\beta_{m,p}(A_{\alpha}),
\end{equation*}
\begin{equation*}
\gamma_{p}^{(2)}(A_{\alpha})
\triangleq -\sum\limits_{m=0}^{N-1-p}\mu_{m,p}\sin\beta_{m,p}(A_{\alpha}),
\end{equation*}
\begin{equation*}
\gamma_{p}^{(3)}(A_{\alpha})\triangleq -\sum\limits_{m=0}^{N-1-p}\lambda_{m,p}\sin\beta_{m,p}(A_{\alpha}),
\end{equation*}
\begin{equation*}
\gamma_{p}^{(4)}(A_{\alpha}) \triangleq -\sum\limits_{m=0}^{N-1-p}\mu_{m,p}\cos\beta_{m,p}(A_{\alpha}),
\end{equation*}
\begin{equation*}
\beta_{m,p}(A_{\alpha}) \triangleq p(2m+p)A_\alpha,~A_\alpha \triangleq -\frac{\pi^2 \sin(2\alpha)}{T^2},
\end{equation*}
\begin{equation*}
\lambda_{m,p} \triangleq \Re\{s[m+p]s^*[m]\},~ \mu_{m,p} \triangleq  \Im\{s[m+p]s^{*}[m]\}.
\end{equation*}



An important observation from \eqref{eq:II3} is that the angle $\alpha$ impacts PAPR through $A_\alpha$. This allows us to represent PAPR purely as a function of $A_\alpha$, i.e.,  $\eta(A_\alpha)$, rather than as a function of both $A_\alpha$ and $\alpha$. This simplification is crucial for effectively optimizing the PAPR.
From this foundation, we establish two properties on the PAPR of $x(t)$.

\begin{lem}\label{lem:1}
The PAPR of DA-FrFDM can be refined as
\begin{equation*}\label{eq:II6}
\begin{aligned}
\eta_{\text{dB}}&= 10\log\left(1+\frac{2\max\{g(t)\}}{\sum\limits_{k=0}^{N-1}s[k]s^*[k]}\right).
\end{aligned}
\end{equation*}
\end{lem}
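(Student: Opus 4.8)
The plan is to extract both the peak power and the average power of $x(t)$ directly from the envelope expansion \eqref{eq:II3}, form their ratio, and pass to decibels. From \eqref{eq:II3} we already have $|x(t)|^2 = \frac{1}{N}\sum_{k=0}^{N-1} s[k]s^*[k] + \frac{2}{N}g(t)$, so the entire argument reduces to understanding the constant term and the fluctuating term $g(t)$ separately.

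First I would compute the average power $\mathbb{E}\big[|x(t)|^2\big]$, read as the time-average of the envelope power over one block $0\le t\le T$. The term $\frac{1}{N}\sum_{k=0}^{N-1} s[k]s^*[k]$ is independent of $t$, while $g(t)$ is a finite linear combination of $\cos(2\pi p t/T)$ and $\sin(2\pi p t/T)$ for $p=1,\dots,N-1$, with coefficients assembled from the $\gamma_p^{(i)}(A_\alpha)$. Each such sinusoid runs through an integer number of periods on $[0,T]$, hence has zero average, so $g(t)$ averages to zero and $\mathbb{E}\big[|x(t)|^2\big] = \frac{1}{N}\sum_{k=0}^{N-1} s[k]s^*[k]$. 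For the peak power, since the DC component is a nonnegative constant that does not vary with $t$, maximizing $|x(t)|^2$ over $t\in[0,T]$ is the same as maximizing $g(t)$, giving $\max\{|x(t)|^2\} = \frac{1}{N}\sum_{k=0}^{N-1} s[k]s^*[k] + \frac{2}{N}\max\{g(t)\}$. Dividing the peak by the average cancels the common $\frac{1}{N}$ and one copy of $\sum_k s[k]s^*[k]$, yielding $\eta = 1 + \frac{2\max\{g(t)\}}{\sum_{k=0}^{N-1} s[k]s^*[k]}$, and $\eta_{\text{dB}} = 10\log\eta$ is the claimed identity.

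The only point that needs care is the averaging step: one must state explicitly that the expectation in \eqref{eq:PAPR_def} denotes the time-average over the DA-FrFDM block, and then verify that every summand of $g(t)$ is a pure tone at a nonzero integer multiple of the block rate $1/T$, so that no cross term survives this averaging. Everything else is elementary manipulation of \eqref{eq:II3}, which is already in hand, so I do not anticipate a genuine obstacle here — the real value of the lemma is that it recasts PAPR as a clean maximization over $g(t)$, which in turn depends on the angle $\alpha$ only through $A_\alpha$, setting up the optimization in the remainder of Section \ref{sec:III}.
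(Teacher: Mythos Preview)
Your argument is correct and matches the paper's own proof: both observe that $g(t)$ is a sum of sinusoids with period $T$ and hence has zero time-average, so $\mathbb{E}[|x(t)|^2]=\frac{1}{N}\sum_k s[k]s^*[k]$, and then substitute directly into \eqref{eq:PAPR_def}. You are simply a bit more explicit about interpreting the expectation as a time-average and about why each harmonic vanishes, but the underlying idea is identical.
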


\begin{proof}
The expectation of $g(t)$ in \eqref{eq:II3} is zero because $g(t)$ is a sum of trigonometric functions over a period $T$. Therefore,
\begin{equation}\label{eq:xtmeanpower}
\mathbb{E}\big[|x(t)|^2\big]=\frac1N\sum_{k=0}^{N-1}s[k]s^*[k]. 
\end{equation} 
Lemma \ref{lem:1} follows by substituting \eqref{eq:xtmeanpower} into \eqref{eq:PAPR_def}.
\end{proof}

\begin{thm}\label{thm:1}
The PAPR is a periodic function of $A_\alpha$ with a period of $\pi$. This periodicity suggests that within the range $\alpha\in[\frac{\pi}{2},\frac{\pi}{2}+\frac{1}{2}{\sin}^{-1}(\frac{T^2}{\pi}))$, all possible values of PAPR can be explored. This range effectively encapsulates the dynamics of PAPR variation with respect to the angle.
\end{thm}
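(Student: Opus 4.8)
\emph{Proof proposal.} The plan is to first use Lemma~\ref{lem:1} to reduce the claim to a periodicity property of $g(t)$, and then to uncover a coupled symmetry that relates a shift of $A_\alpha$ to a shift of the time variable $t$. By Lemma~\ref{lem:1}, for a fixed data block $\bm{s}$ the quantity $\eta_{\text{dB}}$ is a function of $\alpha$ only through $\max_{t\in[0,T]}g(t)$, and by \eqref{eq:II3} the angle $\alpha$ enters $g$ solely through the phases $\beta_{m,p}(A_\alpha)=p(2m+p)A_\alpha$ appearing in $\gamma_p^{(1)},\dots,\gamma_p^{(4)}$. Writing $g_{A_\alpha}(t)$ to display this dependence, it therefore suffices to show $\max_{t\in[0,T]}g_{A_\alpha+\pi}(t)=\max_{t\in[0,T]}g_{A_\alpha}(t)$ for every $A_\alpha$.

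First I would establish the following parity fact. Replacing $A_\alpha$ by $A_\alpha+\pi$ shifts each $\beta_{m,p}$ by $p(2m+p)\pi$; since $p(2m+p)=2mp+p^2\equiv p \pmod 2$, this multiplies $\cos\beta_{m,p}$ and $\sin\beta_{m,p}$ by $(-1)^p$, so $\gamma_p^{(i)}(A_\alpha+\pi)=(-1)^p\gamma_p^{(i)}(A_\alpha)$ for $i=1,\dots,4$. Consequently
\begin{equation*}
g_{A_\alpha+\pi}(t)=\sum_{p=1}^{N-1}(-1)^p\Big[\big(\gamma_p^{(1)}+\gamma_p^{(2)}\big)\cos\tfrac{2\pi pt}{T}+\big(\gamma_p^{(3)}+\gamma_p^{(4)}\big)\sin\tfrac{2\pi pt}{T}\Big].
\end{equation*}
The next observation is that the modulation $(-1)^p$ is exactly what a half-period time shift produces: $\cos\tfrac{2\pi p(t+T/2)}{T}=(-1)^p\cos\tfrac{2\pi pt}{T}$ and likewise for the sine. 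Hence $g_{A_\alpha+\pi}(t)=g_{A_\alpha}(t+T/2)$. Since $g_{A_\alpha}(\cdot)$ has period $T$ in $t$, its supremum over one full period is unaffected by the shift, so $\max_{t\in[0,T]}g_{A_\alpha+\pi}(t)=\max_{t\in[0,T]}g_{A_\alpha}(t)$, which together with Lemma~\ref{lem:1} gives $\eta_{\text{dB}}(A_\alpha+\pi)=\eta_{\text{dB}}(A_\alpha)$, i.e., $\eta$ is $\pi$-periodic in $A_\alpha$.

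For the range statement I would note that $A_\alpha=-\pi^2\sin(2\alpha)/T^2$ is continuous in $\alpha$, equals $0$ at $\alpha=\tfrac{\pi}{2}$, and, as $\alpha$ runs over $[\tfrac{\pi}{2},\tfrac{\pi}{2}+\tfrac12\sin^{-1}(\tfrac{T^2}{\pi}))$, the argument $2\alpha$ stays in $[\pi,\tfrac{3\pi}{2})$, where $\sin$ is strictly decreasing; hence $A_\alpha$ is strictly increasing and approaches $-\pi^2(-\tfrac{T^2}{\pi})/T^2=\pi$ at the right endpoint. Thus $\alpha\mapsto A_\alpha$ is a bijection of that interval onto $[0,\pi)$, and by the $\pi$-periodicity just proved every attainable value of $\eta$ is already realized there. (This uses $T^2/\pi\le 1$ so that the arcsine is defined and $A_\alpha$ can reach $\pi$; otherwise one replaces $[0,\pi)$ by any length-$\pi$ sub-window of the attainable range of $A_\alpha$, with no change to the argument.)

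The main obstacle I anticipate is the middle step: a $\pi$-shift of $A_\alpha$ is not a symmetry of $g$ pointwise in $t$, so periodicity cannot be read off directly; the point is to recognize that the induced $(-1)^p$ weighting of the Fourier modes is \emph{precisely} a half-period shift of $t$, which leaves $\max_t g$ invariant because $g$ has period $T$. The remaining ingredients --- the parity of $p(2m+p)$, the endpoint evaluation of $\sin^{-1}(T^2/\pi)$, and the monotonicity of $\sin(2\alpha)$ --- are routine.
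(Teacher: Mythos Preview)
Your proposal is correct and follows essentially the same route as the paper: reduce via Lemma~\ref{lem:1} to $\max_t g(t)$, use the parity $p(2m+p)\equiv p\pmod 2$ to obtain $\gamma_p^{(i)}(A_\alpha+\pi)=(-1)^p\gamma_p^{(i)}(A_\alpha)$, recognize the $(-1)^p$ weighting as a half-period time shift, and then translate the $\pi$-periodicity in $A_\alpha$ back to the stated $\alpha$-interval. Your treatment of the range via the monotonicity of $\alpha\mapsto A_\alpha$ on $[\tfrac{\pi}{2},\tfrac{\pi}{2}+\tfrac12\sin^{-1}(T^2/\pi))$ and the explicit endpoint check, together with the caveat $T^2/\pi\le 1$, is in fact more careful than the paper's version.
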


\begin{proof}
Lemma~\ref{lem:1} indicates that analyzing the periodicity of $\max\{g(t)\}$ is sufficient for determining the periodic behavior of PAPR for a given data block $\bm{s}$. 

To demonstrate the periodicity, define $A_{\widetilde{\alpha}} \triangleq A_\alpha + \pi$ and substitute $A_{\widetilde{\alpha}}$ into $\gamma_{p}^{(i)}(A_\alpha)$, $i=1,2,3,4$. This substitution results in the following relation:
\begin{equation*}\label{eq:Thm2_1}
\gamma_{p}^{(i)}(A_{\widetilde{\alpha}})=(-1)^p\gamma_{p}^{(i)}(A_\alpha).
\end{equation*}

Additionally, it can be shown that
\begin{equation*}\label{eq:Thm2_2}
(-1)^{p}\cos\left(\frac{2\pi pt}{T}\right)=\cos\left(\frac{2\pi p(t+T/2)}{T}\right),
\end{equation*}
\begin{equation*}\label{eq:Thm2_3}
(-1)^{p}\sin\left(\frac{2\pi pt}{T}\right)=\sin\left(\frac{2\pi p(t+T/2)}{T}\right).
\end{equation*}

Let $\widetilde{g}(t)$ represent the value of $g(t)$ when $A_{\alpha}=A_{\widetilde{\alpha}}$. Using the above equations, we can express $\widetilde{g}(t)$ as follows:
\begin{eqnarray*}
    &&\hspace{-1cm} \widetilde{g}(t) = \sum_{i=1,2}\sum_{p=1}^{N-1}\gamma_{p}^{(i)}(A_{{\alpha}})\cos\biggl(\frac{2\pi p(t+T/2)}{T}\biggr) + \\
    && \sum_{j=3,4}\sum_{p=1}^{N-1}\gamma_{p}^{(j)}(A_{{\alpha}})\sin\biggl(\frac{2\pi p(t+T/2)}{T}\biggr).
\end{eqnarray*}

Since the sine and cosine terms in $\widetilde{g}(t)$ are periodic functions of $t$ with period $T$, shifting these functions along the $t$-axis by $T/2$ does not alter their maximum values. Thus,
\begin{equation*}
    \max\{\widetilde{g}(t)\} = \max\{g(t)\}.
\end{equation*}
This confirms that PAPR is a periodic function of $A_\alpha$ with a period of $\pi$, implying that all possible PAPR values can be examined within the interval $A_\alpha \in (b, b + \pi]$ for any $b \in \mathbb{R}$.


When $\alpha = \frac{\pi}{2}$, our DA-FrFDM degenerates to an OFDM system. Therefore, we can set $b = 0$, in which case $\alpha$ is bounded within the range $\alpha \in \left[\frac{\pi}{2}, \frac{\pi}{2} + \frac{1}{2} \sin^{-1}\left(\frac{T^2}{\pi}\right)\right)$.
\end{proof}

\subsection{Determination of the optimal angle}
Based on Lemma \ref{lem:1} and Theorem \ref{thm:1}, this section investigates how to determine the optimal angle, $\alpha^*$, that minimizes the PAPR. In practical communication systems, where the block duration $T$ is generally short, the PAPR exhibits significant sensitivity to small changes in $\alpha$. This sensitivity necessitates a finely tuned search over $\alpha$ to capture the nuances in PAPR behavior. To facilitate this process without incurring excessive computational costs, it is essential to minimize the search range for $\alpha$. 


The primary challenge in optimizing PAPR lies in managing the $\max{\left\{g(t)\right\}}$ operation. Typically, the $\max$ operation is considered an infinite norm, but for practical purposes, it can be closely approximated by the $n$-th root of the integral of $|g(t)|^n$ over time $T$, $\sqrt[n]{\int_{0}^{T}|g(t)|^ndt}$, even when $n$ is not particularly large \cite{closedform}. In this light, we opt for a surrogate function to simplify the analysis of $\max{\left\{g(t)\right\}}$. 
Specifically, we define the surrogate function as
\begin{equation}\label{eq:III16}
I=\int_0^Tg(t)^4dt.
\end{equation}

Our goal is to determine the optimal angle $\alpha$ for achieving the lowest PAPR. Therefore, we compute the derivative of $I$ with respect to (w.r.t.) $\alpha$:
\begin{equation}\label{eq:III17}
\begin{aligned}
I'(\alpha)&=\frac d{d\alpha}\int_0^Tg(t)^4dt = \frac d{dA_\alpha}\int_0^Tg(t)^4dt\cdot\frac{dA_\alpha}{d\alpha} \\
&= \frac d{dA_\alpha}\int_0^Tg(t)^4dt\cdot\Big(-\frac{2\pi^2\cos(2\alpha)}{T^2}\Big).
\end{aligned}
\end{equation}

The first term of \eqref{eq:III17} can be further refined as
\begin{equation}\label{eq:gt_Aa}
\begin{aligned}
&\frac{d}{dA_\alpha}\int_0^T g(t)^4\, dt \\
&= 4 \int_0^{2\pi} \left( \sum_{p=1}^{N-1} \gamma_{p }^{(1)}(A_{\alpha})\cos(pt) + \sum_{p=1}^{N-1} \gamma_{p }^{(2)}(A_{\alpha})\cos(pt) \right. \\
&+ \left. \sum_{p=1}^{N-1} \gamma_{p }^{(3)}(A_{\alpha})\sin(pt) + \sum_{p=1}^{N-1} \gamma_{p }^{(4)}(A_{\alpha})\sin(pt) \right)^3 \\
&\times \left( \sum_{p=1}^{N-1} \rho_{p }^{(1)}(A_{\alpha})\cos(pt) + \sum_{p=1}^{N-1} \rho_{p }^{(2)}(A_{\alpha})\cos(pt) \right. \\
&+ \left. \sum_{p=1}^{N-1} \rho_{p }^{(3)}(A_{\alpha})\sin(pt) + \sum_{p=1}^{N-1} \rho_{p }^{(4)}(A_{\alpha})\sin(pt) \right) dt.
\end{aligned}
\end{equation}
where 
\begin{equation*}
\rho_{p }^{(1)}(A_{\alpha})\triangleq -\sum\limits_{m=0}^{N-1-p}p(2m+p)\lambda_{m,p}\sin\beta_{m,p }(A_{\alpha}),
\end{equation*}
\begin{equation*}
\rho_{p }^{(2)}(A_{\alpha})\triangleq-\sum\limits_{m=0}^{N-1-p}p(2m+p)\mu_{m,p}\cos\beta_{m,p }(A_{\alpha}),
\end{equation*}
\begin{equation*}
\rho_{p }^{(3)}(A_{\alpha})\triangleq-\sum\limits_{m=0}^{N-1-p}p(2m+p)\lambda_{m,p}\cos\beta_{m,p }(A_{\alpha}),
\end{equation*}
\begin{equation*}
\rho_{p }^{(4)}(A_{\alpha})\triangleq\sum\limits_{m=0}^{N-1-p}p(2m+p)\mu_{m,p}\sin\beta_{m,p }(A_{\alpha}).
\end{equation*}

Upon first examination, \eqref{eq:gt_Aa} appears computationally intensive due to its inclusion of $256\left(N-1\right)^4$ individual integrals, all of which are initially perceived as complex due to their structure. However, these integrals can be efficiently computed by leveraging the periodic nature of trigonometric functions.

\begin{thm}\label{thm:tri}
Let $\xi_1(t), \xi_2(t), \xi_3(t), \xi_4(t) \in \{\cos(t), \sin(t)\}$.
For indices $1\le k,l,m,n\le N-1$,  the integral of the product of these trigonometric functions over a full period is given by
\begin{equation*}
\int_0^{2\pi} \xi_1(kt)\xi_2(lt)\xi_3(mt)\xi_4(nt)\, dt =  \frac{\pi}{4} \bm{q}_i
\begin{bmatrix}
\begin{smallmatrix}
\delta(k + l + m + n)\\
\delta(k + l - m - n)\\
\delta(k + l + m - n)\\
\delta(k + l - m + n)\\
\delta(k - l + m + n)\\
\delta(k - l - m - n)\\
\delta(k - l + m - n)\\
\delta(k - l - m + n)
\end{smallmatrix}
\end{bmatrix},
\end{equation*}
where $\bm{q}_i$ is the $i^{th}$ row of matrix $\bm{Q}$:
\begin{eqnarray*}
i = &&\hspace{-0.5cm} 8\delta\big(\xi_1(t)\!=\!\sin(t)\big) + 4\delta\big(\xi_2(t)\!=\!\sin(t)\big) + \\
&&\hspace{-0.5cm}  2\delta\big(\xi_3(t)\!=\!\sin(t)\big) + \delta\big(\xi_4(t)\!=\!\sin(t)\big) + 1,
\end{eqnarray*}
\begin{equation*}\label{Q}
\bm{Q} = \begin{pmatrix}
\begin{smallmatrix}
0 & +1 & +1 & +1 & +1 & +1 & +1 & +1 \\
0 & 0 & 0 & 0 & 0 & 0 & 0 & 0 \\
0 & 0 & 0 & 0 & 0 & 0 & 0 & 0 \\
0 & -1 & +1 & +1 & -1 & -1 & +1 & +1 \\
0 & 0 & 0 & 0 & 0 & 0 & 0 & 0 \\
0 & +1 & +1 & -1 & +1 & -1 & -1 & +1 \\
0 & +1 & -1 & +1 & +1 & -1 & +1 & -1 \\
0 & 0 & 0 & 0 & 0 & 0 & 0 & 0 \\
0 & 0 & 0 & 0 & 0 & 0 & 0 & 0 \\
0 & +1 & +1 & -1 & -1 & +1 & +1 & -1 \\
0 & +1 & -1 & +1 & -1 & +1 & -1 & +1 \\
0 & 0 & 0 & 0 & 0 & 0 & 0 & 0 \\
0 & -1 & -1 & -1 & +1 & +1 & +1 & +1 \\
0 & 0 & 0 & 0 & 0 & 0 & 0 & 0 \\
0 & 0 & 0 & 0 & 0 & 0 & 0 & 0 \\
0 & +1 & -1 & -1 & -1 & -1 & +1 & +1
\end{smallmatrix}
\end{pmatrix}.
\end{equation*}
\end{thm}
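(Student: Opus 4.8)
\emph{Proof idea.} The plan is to reduce the integral to the elementary orthogonality relation $\int_0^{2\pi} e^{\mathrm{i} j t}\,dt = 2\pi\,\delta(j)$ for $j\in\mathbb{Z}$, by rewriting every trigonometric factor through complex exponentials. First I would use $\cos\theta=\tfrac12\big(e^{\mathrm{i}\theta}+e^{-\mathrm{i}\theta}\big)$ and $\sin\theta=\tfrac{1}{2\mathrm{i}}\big(e^{\mathrm{i}\theta}-e^{-\mathrm{i}\theta}\big)$, so that each factor $\xi_r(\omega t)$ with $\omega\in\{k,l,m,n\}$ becomes $\sum_{\epsilon_r\in\{+1,-1\}}c_r(\epsilon_r)\,e^{\mathrm{i}\epsilon_r\omega t}$, where $c_r(\epsilon_r)=\tfrac12$ when $\xi_r=\cos$ and $c_r(\epsilon_r)=\tfrac{\epsilon_r}{2\mathrm{i}}$ when $\xi_r=\sin$. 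Expanding the product over the $2^4$ sign vectors $\bm{\epsilon}=(\epsilon_1,\epsilon_2,\epsilon_3,\epsilon_4)\in\{\pm1\}^4$ and integrating term by term gives
\[
\int_0^{2\pi}\!\xi_1(kt)\,\xi_2(lt)\,\xi_3(mt)\,\xi_4(nt)\,dt
= 2\pi\!\!\sum_{\bm{\epsilon}\in\{\pm1\}^4}\!\Big(\prod_{r=1}^4 c_r(\epsilon_r)\Big)\,\delta\big(\epsilon_1 k+\epsilon_2 l+\epsilon_3 m+\epsilon_4 n\big).
\]

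Next I would exploit the symmetry $\bm{\epsilon}\mapsto-\bm{\epsilon}$. Since $\delta$ is even, the $\bm{\epsilon}$ and $-\bm{\epsilon}$ terms carry the same Kronecker delta, so the $16$ summands collapse onto the $8$ deltas appearing in the statement, namely those normalized to $\epsilon_1=+1$ and taken in the stated order; it remains to combine the two coefficients in each pair. Let $S\subseteq\{1,2,3,4\}$ collect the indices $r$ with $\xi_r=\sin$ and set $s=|S|$. Then $\prod_{r=1}^4 c_r(\epsilon_r)=2^{-4}\,\mathrm{i}^{-s}\prod_{r\in S}\epsilon_r$, and the substitution $\bm{\epsilon}\mapsto-\bm{\epsilon}$ multiplies this by $(-1)^{s}$. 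Hence, when $s$ is odd the two coefficients cancel and the integral vanishes, which accounts for the all-zero rows of $\bm{Q}$; when $s$ is even the two coefficients coincide and sum to $2\cdot 2^{-4}\,\mathrm{i}^{-s}\prod_{r\in S}\epsilon_r=\tfrac18(-1)^{s/2}\prod_{r\in S}\epsilon_r$. Multiplying by $2\pi$ isolates the overall prefactor $\tfrac{\pi}{4}$ and leaves, for each sign pattern, the entry $(-1)^{s/2}\prod_{r\in S}\epsilon_r\in\{+1,-1\}$ in row $i$ of $\bm{Q}$, with $i$ the index defined in the statement; the first column, associated with $\delta(k+l+m+n)$, is instead set to $0$ because $k+l+m+n\ge 4>0$ on the range $1\le k,l,m,n\le N-1$ makes that delta identically zero.

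Carrying this out for all $16$ cosine/sine assignments and $8$ sign vectors reproduces precisely the matrix $\bm{Q}$, completing the proof. The only genuine difficulty is bookkeeping: keeping the $16\times 8$ indexing consistent, tracking the sign $\prod_{r\in S}\epsilon_r$ and the power $\mathrm{i}^{-s}$ correctly, and checking that the $\bm{\epsilon}\leftrightarrow-\bm{\epsilon}$ pairing yields exactly the eight deltas (and their order) of the statement. There is no conceptual obstacle, since the parity argument for odd $s$ and the coefficient evaluation for even $s$ are both short; the write-up amounts to a careful enumeration.
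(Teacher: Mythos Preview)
Your proposal is correct and follows essentially the same route as the paper: expand the fourfold product into a sum of single harmonics (the paper uses iterated real product-to-sum formulas, you use Euler's formula---these are the same identity), integrate term by term to obtain the eight Kronecker deltas, and read off the sign pattern to populate $\bm{Q}$. Your complex-exponential framing has the minor bonus of giving the closed form $(-1)^{s/2}\prod_{r\in S}\epsilon_r$ for the nonzero entries, whereas the paper stops at the parity-of-$s$ case split and leaves the $\pm 1$ entries to direct enumeration.
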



\begin{proof} 
Using the product-to-sum formula, we can express the product of $\xi_1\left(kt\right)$, $\xi_2\left(lt\right)$, $\xi_3\left(mt\right)$, and $\xi_4\left(nt\right)$ as a sum (or difference) of eight trigonometric functions. 
As an example,
\begin{equation*}
\begin{aligned}
&\cos(2t)\cdot\cos(4t)\cdot\cos(1t)\cdot\cos(5t)\\
&=\left(\frac{1}{2}[\cos(6t)+\cos(-2t)]\right)\cdot\left(\frac{1}{2}[\cos(6t)+\cos(-4t)]\right)\\
&=\frac{1}{4}[\cos(6t)\cos(6t)+\cos(6t)\cos(-4t)\\
&\hspace{0.8cm} +\cos(-2t)\cos(6t)+\cos(-2t)\cos(-4t)]\\
&=\frac{1}{8}[\cos(12t)+1+\cos(2t)+\cos(10t)\\
&\hspace{0.8cm}+\cos(4t)+\cos(-8t)+\cos(-6t)+\cos(2t)].
\end{aligned}
\end{equation*}

In general, we represent this process as:
\begin{equation*}
\begin{aligned}
& \xi_1(kt)\cdot\xi_2(lt)\cdot\xi_3(mt)\cdot\xi_4(nt)\\
& =\frac14[u_1((k+l)t)+u_{2}((k-l)t)]\\
&  \hspace{0.5cm} \times [u_{3}((m+n)t)+u_{4}((m-n)t)]\\
& =\frac18[u_5((k+l+m+n)t)+u_6((k+l-m-n)t)\\
&  \hspace{0.5cm} +u_7((k+l+m-n)t)+u_8((k+l-m+n)t)\\
&  \hspace{0.5cm} +u_9((k-l+m+n)t)+u_{10}((k-l-m-n)t)\\
&  \hspace{0.5cm} +u_{11}((k-l+m-n)t)+u_{12}((k-l-m+n)t)],
\end{aligned}
\end{equation*}
where $u_{1\sim 12}(t)$ represents either $\pm \cos(t)$ or $\pm \sin(t)$. 

In particular,
\begin{equation*}
   \int_0^{2\pi} u_{i}(wt) dt = 
   \begin{cases}
       0, & \text{If $u_{i}(wt)=\pm \sin(wt)$}, \\
       \pm 2\pi \delta(w), & \text{If $u_{i}(wt)=\pm \cos(wt)$}.
   \end{cases}
\end{equation*}

This outcome guides the computation using the matrix $\bm{Q}$. Specifically,
\begin{itemize}
    \item If there are either one or three $\cos(t) $ terms among $\xi_{1\sim4}(t)$, then  $u_{5\sim12}(t)$  will be   $\pm \sin(t) $, resulting in a zero integral. This corresponds to the zero rows in $\bm{Q}$.
    \item Otherwise, $ u_{5\sim12}(t)$ are $\pm \cos(t)$, and the integral's outcome depends on the frequency combinations, represented by the $\pm{1}$ entries in $\bm{Q}$.
    \item The first column of $\bm{Q}$ is entirely zero because $k + l + m + n$ is non-zero for $1 \leq k, l, m, n \leq N-1$.
\end{itemize}

The row index $i$ is determined by the indicator functions  $\delta(\xi_j(t) = \sin(t))$, which check if each  $\xi_j(t)$ is  $\sin(t)$ or $\cos(t)$.
\end{proof}

\begin{algorithm}[t]
\caption{Discover the optimal angle $\alpha^*$ for DA-FrFDM}\label{algo:1}
\begin{algorithmic}[1]
\State {\bf Input:} $N$, $T_s$, and $\bm{s}$.
\State {\bf Output:} Optimal angle $\alpha^{*}$.

\State Determine the initial search set $\Omega_0$ based on Theorem \ref{thm:1}.

\For{$\alpha_i \in \Omega_0$}
    \State Compute $\gamma_{p}^{(j)}(A_{\alpha_i})$ and $\rho_{p}^{(j)}(A_{\alpha_i})$ for $j = 1, 2, 3, 4$.
    \State Compute $I'(\alpha_i)$ by Theorem~\ref{thm:tri}.
\EndFor

\State Initialize $\Omega = \emptyset$.
\For{$\alpha_i\in\Omega_0$}
    \If{$I'(\alpha_i) \leq 0$ and $I'(\alpha_i + \Delta \alpha) \geq 0$} 
        \State $\Omega = \Omega \cup \left\{\alpha_i + j \Delta \alpha^{\prime} : j = 0, 1, 2, \dots, \frac{\Delta \alpha}{\Delta \alpha^{\prime}}\right\}$.
    \EndIf
\EndFor
 
\For{$\alpha_j\in\Omega$}
    \If{$I'(\alpha_j) \leq 0$ and $I'(\alpha_j + \Delta \alpha^{\prime}) \geq 0$}
        \State Retain $\alpha_j$ in $\Omega$.
    \Else
        \State $\Omega=\Omega\backslash\{\alpha_j\}$.
    \EndIf
\EndFor

\State Search through $\Omega$: $\alpha^*=\arg\min_{\alpha\in\Omega}\eta(A_{\alpha})$.
\end{algorithmic}
\end{algorithm}

Theorem~\ref{thm:tri} provides an efficient method for computing $I'(\alpha)$, which underpins the design of an algorithm to identify a refined set of candidate angles, $\Omega$, for locating the optimal angle $\alpha^*$ that minimizes PAPR. The algorithm is described in Algorithm~\ref{algo:1}, with details on its steps outlined below.

The algorithm begins by taking three input parameters: the number of subcarriers $N$, the sampling period $T_s$, and the signal block $\bm{s}$. Using these inputs, it establishes an initial search range for $\alpha$ based on Theorem \ref{thm:1}: 
$\alpha \in \left[\frac{\pi}{2}, \frac{\pi}{2} + \frac{1}{2}\sin^{-1}\left(\frac{T^2}{\pi}\right)\right)$.
This range is then discretized with an initial step size $\Delta \alpha$, generating an initial search set:
\begin{equation}\label{eq:init}
\Omega_0 \!=\! \left\{\frac{\pi}{2} + i\Delta \alpha : i = 0, 1, 2, \ldots, \frac{1}{2\Delta \alpha}\sin^{-1}\!\left(\frac{T^2}{\pi}\right) \!-\! 1 \right\}.
\end{equation}

For each $\alpha_i \in \Omega_0$, the algorithm computes $\gamma_{p}^{(j)}(A_{\alpha_i})$ and $\rho_{p}^{(j)}(A_{\alpha_i})$ for $j = 1, 2, 3, 4$, along with the corresponding $I'(\alpha_i)$ using Theorem~\ref{thm:tri}. It then identifies potential local minima by verifying if $I'(\alpha_i) \leq 0$ and $I'(\alpha_i + \Delta \alpha) \geq 0$. 
When these conditions are met, indicating a local minimum, we form a finer set of candidate values:
\begin{equation*}
\Omega = \Omega \cup \bigg\{\alpha_i + j\Delta \alpha^{\prime}, j = 0, 1, 2, \ldots, \frac{\Delta \alpha}{\Delta \alpha^{\prime}}\bigg\},
\end{equation*}
where $\Delta \alpha^{\prime}$ is a finer step size, and $\Omega$ is initialized as an empty set.
Next, the algorithm further narrows the finer search set $\Omega$ by iterating over each candidate $\alpha_j\in \Omega$ and retaining only those values that satisfy $I'(\alpha_j) \leq 0$ and $I'(\alpha_j + \Delta \alpha^{\prime}) \geq 0$.

In the final step, the algorithm searches over $\Omega$ to identify the optimal angle that minimizes the PAPR.

\section{Circular Convolution Theorem for DA-FrFDM}\label{sec:IV}

In multi-carrier systems, the circular convolution theorem serves as a foundation for simplifying equalization \cite{stuber2004broadband,PAMA}. In traditional OFDM systems, the circular convolution theorem allows for straightforward equalization by converting the channel's time-domain impulse response to the frequency domain, where it acts as a simple one-tap equalizer. 

When extending to FrFT-based multi-carrier systems, one-tap equalization requires transforming the channel into the fractional Fourier domain \cite{amir}. However, this transformation complicates the receiver since it demands computing the fractional Fourier domain response of the channel. This added complexity makes direct application of the fractional Fourier convolution theorem cumbersome for systems that vary 
$\alpha$, as each angle change requires recalculating the channel response.

To address the limitations of conventional FrFT-based equalization in the DA-FrFDM system, we propose a refined circular convolution theorem. This theorem leverages a quadratic phase term applied to the time-domain samples, enabling one-tap equalization without recalculating the channel for each fractional angle. By introducing this phase term, DA-FrFDM can achieve effective equalization using the fixed frequency domain channel response, independent of the fractional angle.

\begin{figure*}[t]
  \centering
  \includegraphics[width=1.9\columnwidth]{./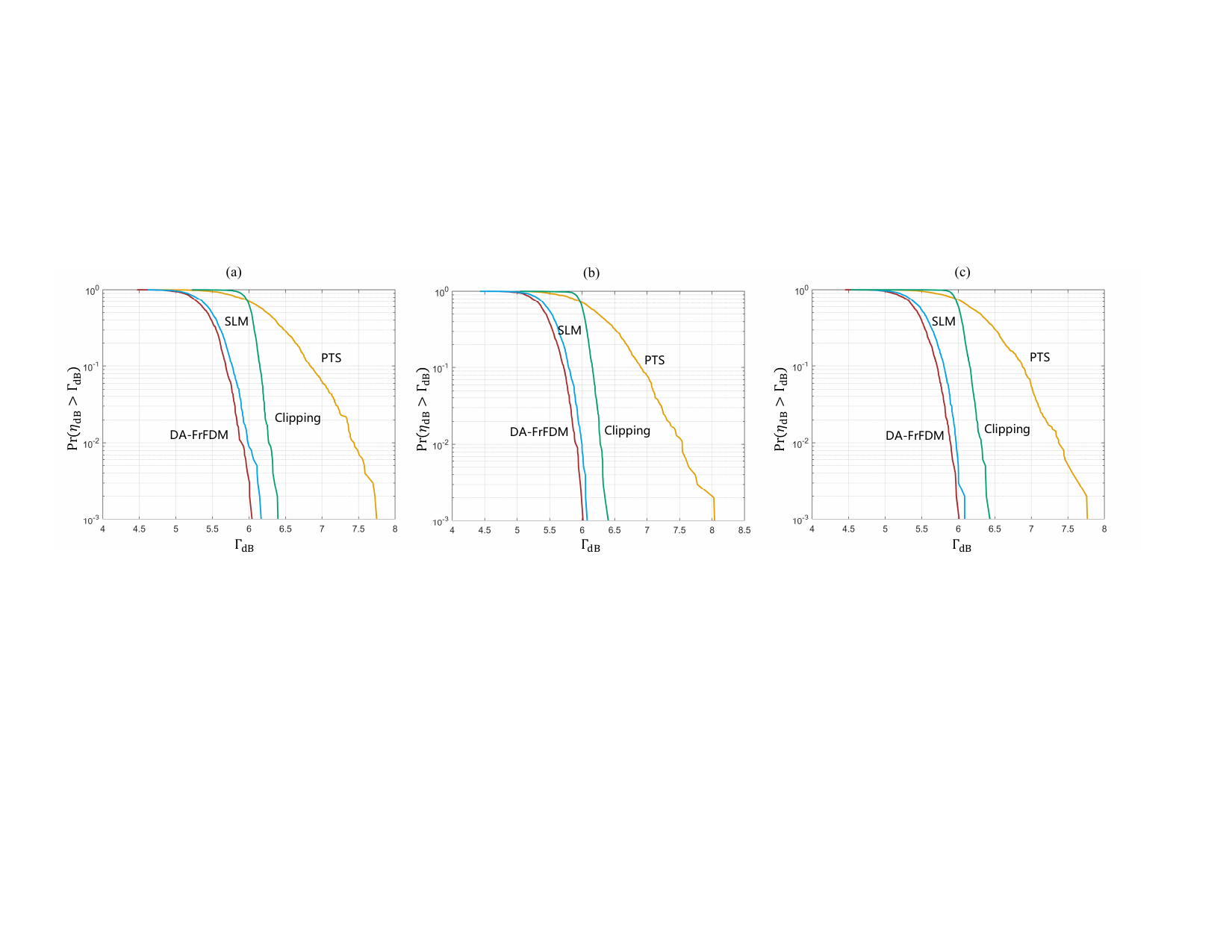}\\
  \caption{CCDF of PAPR with different PAPR reduction techniques: (a) PAPR reduction for complex Gaussian signals; (b) PAPR reduction for 64QAM; (c) PAPR reduction for 128QAM.}
\label{fig:papr}
\end{figure*}

\begin{thm}[Circular Convolution for DA-FrFDM] \label{thm:cir_thm}
In DA-FrFDM, let $\bm{z}$ denote the time-domain samples modified by the phase term $e^{j\theta(n)}$ at the transmitter, and $\widehat{\bm{z}}$ the time-domain samples at the receiver after CP removal. The samples $\widehat{\bm{z}}$ can be expressed as
\begin{equation*}
\widehat{z}[n]=h[n]\circledast z[n], n=0, 1,\cdots, N-1,
\end{equation*}
where $\circledast$ represents circular convolution, and $\bm{h}=\big[h[0],\allowbreak h[1],\allowbreak \cdots,\allowbreak h[N-1]\big]^\top$ denotes the time-domain samples of channel impulse response.

Let $\theta(n)=\frac{1}{2}n^2\cot\alpha\allowbreak T_s^2$, we have
\begin{equation*}
\widehat{s}[k] = h_f[k]s[k], k=0, 1,\cdots, N-1,
\end{equation*}
where $\bm{h}_f=\bm{F}_{1}\bm{h}$ is the frequency-domain channel response  independent of the fractional angle $\alpha$.
\end{thm}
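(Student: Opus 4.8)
The plan is to establish the two displayed identities separately. The first, $\widehat z[n]=h[n]\circledast z[n]$, is the classical cyclic-prefix identity and is independent of the phase term: provided the CP is at least as long as the discrete delay spread of $h(t)$, discarding the CP at the receiver turns the linear convolution in \eqref{eq:yt} into a length-$N$ circular convolution of $\bm z$ with the sampled channel $\bm h$. I would state this as a standard fact, making the CP-length assumption explicit, and devote no further effort to it, since $\widehat{\bm z}$ is by definition the received block \emph{before} the de-rotation by $e^{-j\theta(n)}$.

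The substance is the second identity, and the strategy is to recognise $\theta(n)=\tfrac12 n^2\cot\alpha\,T_s^2$ as precisely the time-domain chirp built into the sampling-type (I)DFrFT kernel, so that inserting $e^{j\theta(n)}$ at the transmitter and $e^{-j\theta(n)}$ at the receiver collapses both $\bm F_\alpha^{-1}$ and $\bm F_\alpha$ onto an ordinary (I)DFT that is merely ``dressed'' by a fixed per-subcarrier chirp $e^{\mp\frac{j}{2}k^2\cot\alpha(\Delta u)^2}$. Concretely I would proceed in four steps. \emph{(i)} Read off from \eqref{eq:II1} that multiplying $x[n]$ by $e^{j\theta(n)}$ cancels the factor $e^{-\frac{j}{2}n^2\cot\alpha T_s^2}$, so that $z[n]$ equals, up to the unit-modulus scalar $\sqrt{\sin\alpha+j\cos\alpha}$, the inverse DFT of $\widetilde s[k]\triangleq s[k]e^{-\frac{j}{2}k^2\cot\alpha(\Delta u)^2}$; that is, $\bm z=\sqrt{\sin\alpha+j\cos\alpha}\,\bm F_1^{-1}\widetilde{\bm s}$. \emph{(ii)} Feed this into the ordinary circular-convolution theorem together with the first identity to get $(\bm F_1\widehat{\bm z})[k]=h_f[k]\,(\bm F_1\bm z)[k]=h_f[k]\sqrt{\sin\alpha+j\cos\alpha}\,\widetilde s[k]$ entrywise, with $\bm h_f=\bm F_1\bm h$ (up to the usual normalization constant of the convolution theorem). \emph{(iii)} At the receiver, write $\widehat x[n]=\widehat z[n]e^{-j\theta(n)}$ and apply the forward transform $\bm F_\alpha=(\bm F_\alpha^{-1})^{\mathrm H}$ — obtained as the conjugate transpose of \eqref{eq:II1} using the unitarity of the sampling-type DFrFT — whose kernel carries the conjugate time chirp $e^{+\frac{j}{2}n^2\cot\alpha T_s^2}$ and the frequency chirp $e^{+\frac{j}{2}k^2\cot\alpha(\Delta u)^2}$; the time chirp annihilates $e^{-j\theta(n)}$, leaving $\widehat s[k]=\sqrt{\sin\alpha-j\cos\alpha}\;e^{\frac{j}{2}k^2\cot\alpha(\Delta u)^2}\,(\bm F_1\widehat{\bm z})[k]$. \emph{(iv)} Substitute \emph{(ii)}; the chirp $e^{\frac{j}{2}k^2\cot\alpha(\Delta u)^2}$ contributed by $\bm F_\alpha$ cancels the one inside $\widetilde s[k]$, and the leftover scalars multiply to $\sqrt{(\sin\alpha-j\cos\alpha)(\sin\alpha+j\cos\alpha)}=\sqrt{\sin^2\alpha+\cos^2\alpha}=1$, leaving exactly $\widehat s[k]=h_f[k]\,s[k]$, with $\bm h_f$ manifestly independent of $\alpha$.

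There are three cancellations doing all the work — the transmitter time chirp against $e^{j\theta(n)}$, the receiver time chirp against $e^{-j\theta(n)}$, and the receiver-DFrFT frequency chirp against the symbol chirp in $\widetilde s[k]$ — and each is a one-line identity once the kernels are written out. I expect the only genuinely delicate point to be the bookkeeping: one must define $\bm F_\alpha$ consistently with \eqref{eq:II1} (the unitarity invoked for this itself follows from the frequency chirps cancelling in the sum over $k$), and one must track the DFT normalization so that the statement ``$\bm h_f=\bm F_1\bm h$'' carries whatever $\sqrt N$ factor the chosen convention of the circular-convolution theorem dictates. No inequalities or limiting arguments enter; once the conventions are pinned down the proof is a direct substitution.
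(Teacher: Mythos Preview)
Your argument is correct and in fact more transparent than the paper's. The paper proceeds by brute force: it writes $\widehat s[k]$ as the DFrFT of $\widehat z[n]e^{-j\theta(n)}$, expands the circular convolution $h\circledast z$ explicitly into its wrap-around and non-wrap-around parts, performs a change of summation variable in each, recombines them into a single double sum over $(m,p)$, and finally recognises that sum as a product of an unnormalised DFT of $h$ and the DFrFT of $x$ with its time chirp removed. You instead exploit the factorisation of the sampling-type (I)DFrFT as (time chirp)$\times$(DFT)$\times$(frequency chirp): once $e^{\pm j\theta(n)}$ kills the time chirps on both ends, the whole system is literally OFDM acting on the pre-chirped symbols $\widetilde s[k]=s[k]e^{-\frac{j}{2}k^2\cot\alpha(\Delta u)^2}$, so the ordinary DFT circular-convolution theorem applies directly and the residual frequency chirps cancel between $\bm F_\alpha$ and $\widetilde s$. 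Your route makes the mechanism (``DA-FrFDM with the quadratic phase correction is OFDM in disguise'') explicit and avoids the index gymnastics of splitting and recombining the convolution; the paper's route is more self-contained in that it does not invoke the DFT convolution theorem as an external lemma. Your caution about the $\sqrt N$ bookkeeping is well placed: with the unitary convention that \eqref{eq:II1} forces on $\bm F_1$, the identity $\widehat s[k]=h_f[k]s[k]$ holds with $h_f[k]$ equal to the \emph{unnormalised} DFT of $h$, i.e., $\sqrt N\,(\bm F_1\bm h)[k]$; the paper's own final line implicitly uses this convention as well.
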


\begin{proof}
From the definition, we have $z[n] = x[n] \allowbreak e^{\frac{j}{2}n^2\cot\alpha T_s^2}$ and $\widehat{x}[n] = \widehat{z}[n] e^{-\frac{j}{2}n^2\cot\alpha T_s^2}$.

Let $k_\alpha=\sqrt{\frac{\sin\alpha + j\cos\alpha}{N}}$, we have
\begin{eqnarray*}
&&\hspace{-0.4cm} \widehat{s}[k] = k_\alpha^* \sum_{n=0}^{N-1} h[n]\circledast\widetilde{x}[n] e^{-\frac{j}{2}n^2\cot\alpha T_s^2}  \times\\
&&\hspace{0.6cm}  e^{\frac{j}{2}n^2\cot\alpha(T_s L)^2} e^{\frac{j}{2}k^2\cot\alpha(\Delta u)^2} e^{-j2\pi nk/N} \\
&&\hspace{-0.4cm} = k_\alpha^* \sum_{n=0}^{N-1} h[n]\circledast\widetilde{x}[n] e^{\frac{j}{2}k^2\cot\alpha(\Delta u)^2} e^{-j2\pi nk/N}.
\end{eqnarray*}
After expanding the circular convolution,
\begin{eqnarray*}
&&\hspace{-0.4cm} \widehat{s}[k]= k_\alpha^* \sum_{m=0}^{N-1} \left( \sum_{n=m}^{N-1} h[m] x[n-m] \times \right.\\
&&\hspace{-0.4cm}  \left.  e^{\frac{j}{2}(n-m)^2\cot\alpha T_s^2} e^{\frac{j}{2}k^2\cot\alpha(\Delta u)^2} e^{-j2\pi nk/N} \right)+\\
&&\hspace{-0.4cm}   k_\alpha^* \sum_{m=0}^{N-1} \left( \sum_{n=0}^{m-1} h[m] x[n-m+N] \right. \times\\
&&\hspace{-0.4cm}  \left.e^{\frac{j}{2}(n-m+N)^2\cot\alpha T_s^2} e^{\frac{j}{2}k^2\cot\alpha(\Delta u)^2} e^{-j2\pi nk/N} \right) 
\end{eqnarray*}
Following a change of variables, we obtain
\begin{eqnarray*}
&&\hspace{-1.4cm} \widehat{s}[k]= k_\alpha^* \sum_{m=0}^{N-1} \left( \sum_{p=0}^{N-m-1} h[m] x[p] e^{\frac{j}{2}p^2\cot\alpha T_s^2} \right.\times \\
&&\hspace{-0.4cm}  \left.  e^{\frac{j}{2}k^2\cot\alpha(\Delta u)^2} e^{-j2\pi (m+p)k/N} \right) +\\
&&\hspace{-0.4cm}   k_\alpha^* \sum_{m=0}^{N-1} \left( \sum_{p=N-m}^{N-1} h[m] x[p] e^{\frac{j}{2}p^2\cot\alpha T_s^2} \right.\times\\
&&\hspace{-0.4cm}  \left.  e^{\frac{j}{2}k^2\cot\alpha(\Delta u)^2} e^{-j2\pi (m+p-N)k/N} \right)
\end{eqnarray*}
Finally, the equation can be further refined as
\begin{eqnarray*}
&&\hspace{-1.5cm} \widehat{s}[k]= k_\alpha^* \sum_{m=0}^{N-1} \sum_{p=0}^{N-1} h[m] x[p] e^{\frac{j}{2}p^2\cot\alpha T_s^2}  \times\\
&&\hspace{-0.9cm}  e^{\frac{j}{2}k^2\cot\alpha(\Delta u)^2} e^{-j2\pi (m+p)k/N} \\
&&\hspace{-0.8cm}  =  h_f[k] s[k],
\end{eqnarray*}
proving Theorem \ref{thm:cir_thm}.
\end{proof}

The circular convolution theorem thus allows DA-FrFDM to leverage fractional Fourier-based modulation to optimize PAPR with minimal computational overhead for the one-tap equalization at the receiver.

\begin{rem}
Theorem \ref{thm:cir_thm} establishes a foundational form of the circular convolution theorem for DA-FrFDM. In practical implementations, as well as in the simulations in Section \ref{sec:V}, time-domain signals are often oversampled to accurately approximate the analog waveform \cite{shao2021federated}. Given an oversampling rate of $L$, oversampling can be achieved by zero-padding $\bm{s}$ with $NL-L$ additional points and performing an $NL$-point IDFrFT.

For oversampled DA-FrFDM systems, the circular convolution theorem remains valid if we define the phase term as $\theta(i)=\frac{1}{2}i^2\cot\alpha T^2_s L^2$, where $i=0,1,\cdots,NL-1$. The proof follows the steps in the proof of Theorem \ref{thm:cir_thm} and is therefore omitted for brevity.
\end{rem}

\section{Simulation Results}\label{sec:V}
\begin{table}[t]
\caption{Simulation parameter settings.}
\label{tab:notations}
\centering
\setlength{\tabcolsep}{3mm} 
\begin{tabular}{ccc}
\toprule
\textbf{Parameters} & \textbf{Descriptions} &\textbf{Value} \\ 
\midrule
$N$ &Number of subcarriers& $64$ \\ 
$N_{cp}$ &CP length& $10$ \\
$L$ &Oversampling rate& $10$ \\ 
$T$ &Symbol duration& $128~\mu s$ \\ 
$\Delta\alpha$ &Step size& $\Delta{\alpha_1} = \frac{1}{80} \sin^{-1}\left(\frac{T^2}{\pi}\right)$ \\  
$\Delta\alpha^\prime$ &Step size& $\frac{\Delta{\alpha_1}}{39}$ \\ 
$\text{CR}$ &Clipping ratio& $\text{CR} = 2$ \\ 
\bottomrule
\end{tabular}
\label{tab:params}
\end{table}

\begin{figure*}[t]
  \centering
  \includegraphics[width=1.9\columnwidth]{./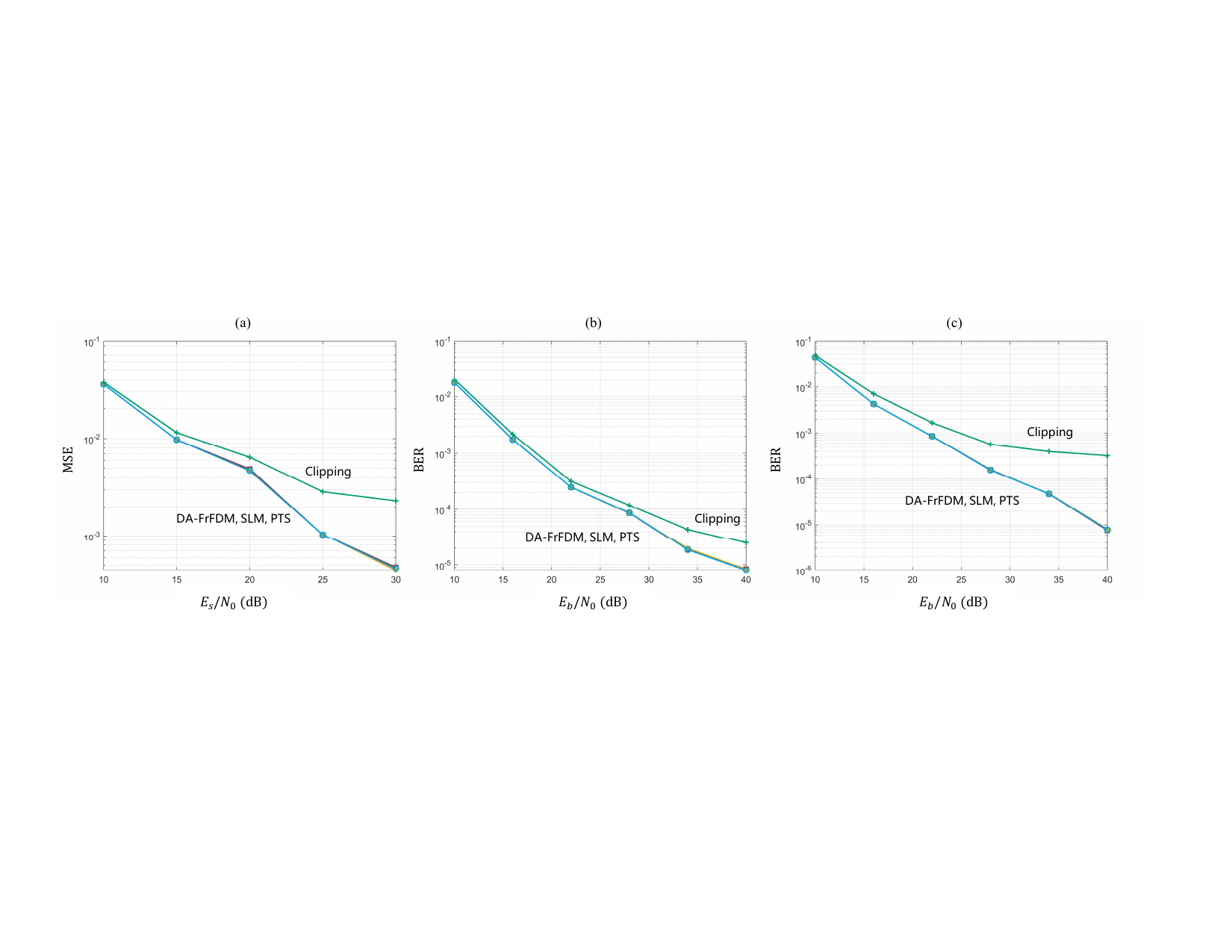}\\
  \caption{Decoding performance with different PAPR reduction techniques: (a) MSE of complex Gaussian signals; (b) BER of 64QAM signals; (c) BER of 128QAM signals.}
\label{fig:mse_ber}
\end{figure*}

In this section, we evaluate the performance of our DA-FrFDM system, focusing on three key advantages: reduced PAPR, simplified equalization, and resilience to ICI in doubly dispersive channels.

\subsection{PAPR Performance}
To assess the PAPR performance of our DA-FrFDM scheme, we selected three widely-used PAPR reduction methods for OFDM systems as benchmarks: Clipping, Selective Mapping (SLM), and Partial Transmit Sequence (PTS). These methods were chosen for comparison because, like DA-FrFDM, they preserve spectral efficiency and require minimal signaling overhead.

The simulation parameters are summarized in Table~\ref{tab:params}. Specifically, the number of subcarriers $N = 64$, the CP length $N_{cp}=10$, and the symbol duration $T = 128~\mu s$. To accurately capture PAPR behavior of analog waveforms, we consider an oversampling rate $L = 10$. The data symbols 
$\bm{s}$ are transmitted as either complex Gaussian symbols or QAM modulated symbols to evaluate the versatility of DA-FrFDM across different symbol types.

The Complementary Cumulative Distribution Function (CC\-DF) is used as the metric for PAPR performance, reflecting the probability that the PAPR exceeds a given threshold: $\text{CCDF}(\Gamma) =  \Pr(\eta > \Gamma)$. This measure provides a comprehensive view of peak occurrence in the transmitted signal, allowing a direct comparison across schemes.

Fig.~\ref{fig:papr} illustrates the CCDF performance for various PAPR reduction schemes, with data symbols configured as either complex Gaussian or QAM. In the baseline OFDM system without PAPR reduction, the PAPR required to reach a CCDF of $10^{-3}$ is approximately $12.3$ dB. In contrast, our DA-FrFDM scheme significantly reduces PAPR, nearly halving it compared to the baseline. When benchmarked against leading PAPR reduction techniques, i.e., clipping, SLM, and PTS, DA-FrFDM consistently outperforms, confirming the effectiveness of fractional Fourier domain-based PAPR reduction.

\begin{rem}
For DA-FrFDM, Algorithm~\ref{algo:1} provides an efficient way to identify the search set $\Omega$ for the optimal angle. However, each candidate within $\Omega$ still requires PAPR evaluation to determine the angle that minimizes PAPR. By comparison, SLM and PTS techniques achieve PAPR reduction by applying a phase sequence to the signal, which similarly involves searching through phase sequences and evaluating PAPR to find the optimal configuration. To ensure a fair comparison, in Fig.~\ref{fig:papr} we set a fixed PAPR evaluation budget of $128$ for all methods.
\end{rem}

\subsection{Equalization efficiency}
To evaluate the equalization performance of the DA-FrFDM system, we assess its decoding accuracy, providing validation for the improved circular convolution theorem as stated in Theorem \ref{thm:cir_thm}.

The simulation results are illustrated in Fig.~\ref{fig:mse_ber}, where a six-path Rayleigh fading channel is modeled. For the QAM data symbols, bit error rate (BER) performance is analyzed, while mean square error (MSE) is assessed for Gaussian symbols.

As can be seen, the DA-FrFDM system achieves decoding performance comparable to OFDM systems using SLM and PTS techniques under Rayleigh fading conditions, with the added benefit of lower PAPR. This outcome confirms the efficiency and simplicity of the proposed equalization process. In contrast, the clipping method leads to a noticeable decline in decoding performance, underscoring the advantages of the DA-FrFDM approach.

\begin{figure}[t]
  \centering
  \includegraphics[width=0.8\columnwidth]{./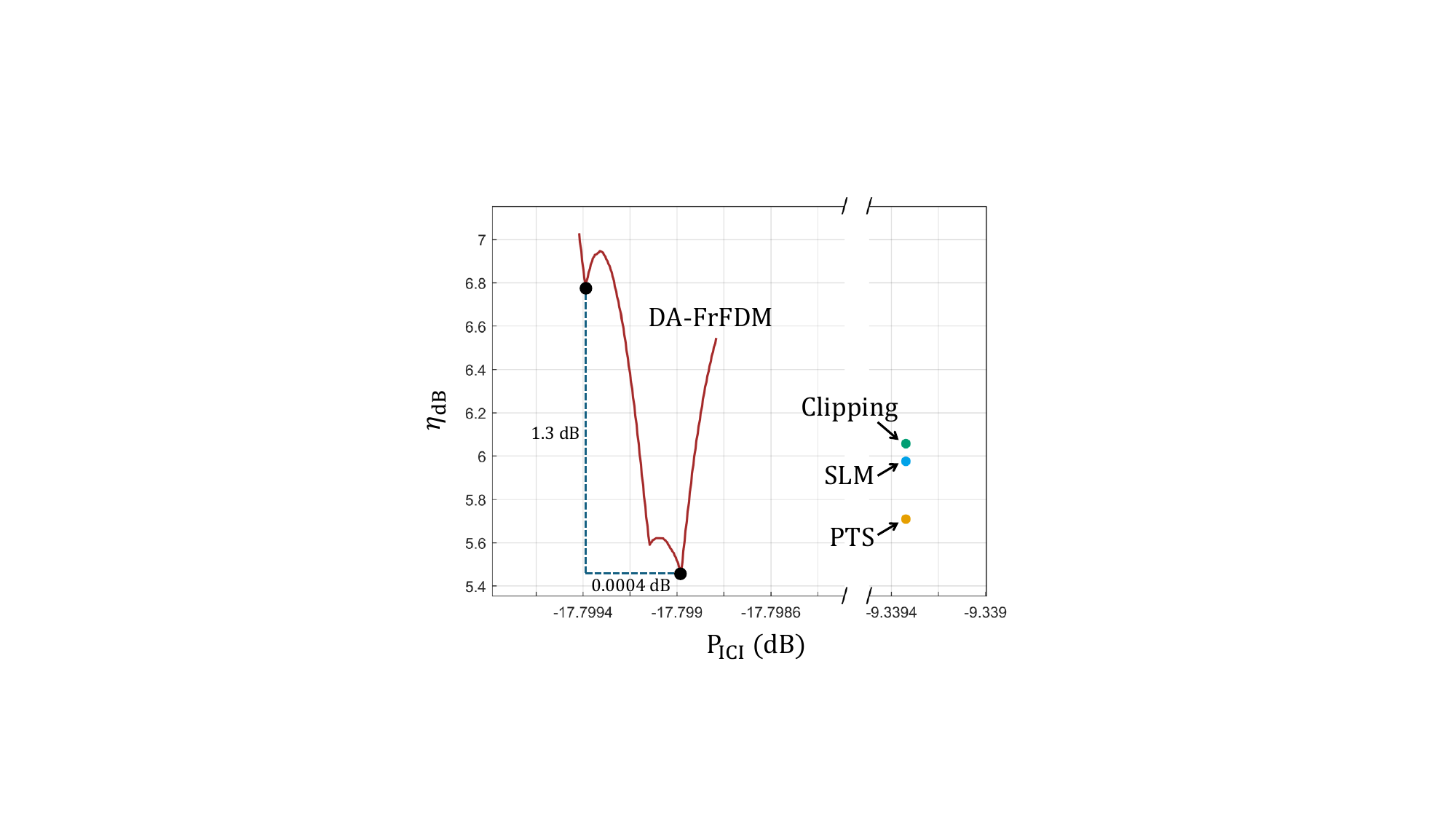}\\
  \caption{Trade-off between PAPR and $P_{ICI}$ in doubly dispersive channels.}
\label{fig:ici}
\end{figure}

\subsection{ICI mitigation in doubly dispersive channels}
In addition to reducing the PAPR, the DA-FrFDM system offers an inherent advantage of mitigating ICI in doubly dispersive channels, a major advantage over traditional OFDM systems. In these channels -- characterized by frequency-selective and fast fading -- DA-FrFDM adjusts the frequency of each subcarrier over time at a rate proportional to $\cot{\alpha}$, allowing it to adapt to channel variations more effectively than static OFDM.

The ICI mitigation capability of DA-FrFDM is closely tied to the angle $\alpha$ in the fractional Fourier domain \cite{mokhtari2015near}. This parameter influences the system's adaptability in doubly dispersive channels, enabling DA-FrFDM to perform dual roles: reducing PAPR while also enhancing ICI resilience. The ability to adjust 
$\alpha$ allows DA-FrFDM to dynamically balance PAPR reduction with ICI suppression based on channel conditions.

An intriguing question arises regarding the potential trade-off between PAPR reduction and ICI mitigation in DA-FrFDM. Specifically, adjusting $\alpha$ to optimize PAPR performance may influence the system's ICI handling capabilities, inviting further exploration into the optimal balance between these two performance metrics. This flexibility positions DA-FrFDM as a robust, adaptable solution for challenging wireless environments.

Consider a doubly dispersive channel with four distinct paths characterized by path gains of $0$ dB, $-4$ dB, $-5$ dB, and $-8$ dB, Doppler shifts of $500Hz$, $1600Hz$, $2200Hz$, and $3800Hz$, and path delays of $0$, $10 \mu s$, $20 \mu s$, and $40 \mu s$, respectively. Fig.~\ref{fig:ici} illustrates the trade-off between PAPR and the power of ICI for the DA-FrFDM system, specifically for a complex Gaussian data symbol block realization.

An important observation from the figure is that by adjusting the angle parameter, 
$\alpha$, the PAPR of the transmitted signal can be incrementally decreased. Notably, this adjustment results in only a minimal increase in ICI. This outcome demonstrates a promising feature of DA-FrFDM, indicating that the system can reduce PAPR effectively while preserving its capability for ICI mitigation, even in doubly dispersive channels.

\section{Conclusions}\label{sec:Conclusion}
This paper presented an innovative approach to tackling the high PAPR challenge in multi-carrier systems. Operating within the fractional Fourier domain and using a dynamic angle adjustment, DA-FrFDM achieves significant PAPR reduction and enhances ICI mitigation, even in doubly dispersive channel conditions. Additionally, DA-FrFDM enables efficient one-tap equalization by applying a quadratic phase sequence, simplifying signal processing and reducing receiver complexity.

Beyond PAPR reduction, the DA-FrFDM framework offers a versatile solution with potential for improved energy efficiency, enhanced reliability, and adaptability to challenging channel conditions in next-generation wireless communications. The insights from this work suggest that DA-FrFDM could serve as a robust foundation for future developments in multi-carrier communication systems, addressing both current and emerging performance demands in wireless networks.

\bibliographystyle{IEEEtran}
\bibliography{References}

\end{document}